\documentclass{article}
\usepackage{algorithm}

\usepackage{}
\usepackage{enumerate}
\usepackage{mathbbold}
\usepackage{bbm}
\usepackage{amsfonts}
\usepackage[T1]{fontenc}
\usepackage{latexsym,amssymb,amsmath,amsfonts,amsthm}
\usepackage{graphicx, color}
\usepackage{subfigure}
\usepackage{epsf}
\usepackage{epstopdf}
\usepackage{amssymb}
\usepackage{authblk} 
\usepackage{amsthm}
\usepackage{multirow}
\usepackage{diagbox}
\usepackage{stmaryrd}
\usepackage{mathrsfs}
\usepackage{hyperref}
\hypersetup{
colorlinks =true, 
linkcolor=black, 
anchorcolor=black,
citecolor=black}

\numberwithin{equation}{section}

\newcommand{\R}{{\mathbb R}}

\newcommand{\be}{\begin{eqnarray}}
\newcommand{\ben}{\begin{eqnarray*}}
\newcommand{\en}{\end{eqnarray}}
\newcommand{\enn}{\end{eqnarray*}}

\newcommand{\pa}{\partial}

\newcommand{\Om}{\Omega}

\newtheorem{theorem}{Theorem}[section]

\begin{document}
\fontsize{9}{11}\selectfont
\title{\bf A Radon-transform-based formula for reconstructing acoustic sources from the scattered fields}

\author[1]{Xiaodong Liu}
\author[1,2]{Jing Wang} 
\makeatletter 
\renewcommand{\AB@affilnote}[1]{\footnotemark[#1]} 
\renewcommand{\AB@affillist}[1]{}
\renewcommand{\AB@affillist}[2]{} 
\makeatother

\date{}
\maketitle
\footnotetext[1]{State Key Laboratory of Mathematical Sciences, Academy of Mathematics and Systems Science,
Chinese Academy of Sciences, Beijing 100190, China. Email: xdliu@amt.ac.cn} 
\footnotetext[2]{Corresponding author. School of Mathematical Sciences, University of Chinese Academy of Sciences, Beijing 100049, China, Email:wangjing23@amss.ac.cn}

\begin{abstract}
We propose a novel indicator function for reconstructing acoustic sources from multi-frequency near-field measurements. The theoretical basis is established by a formula relating the scattered field to the source function through the Radon transform. Such a representation enables us to recover the source function directly. The efficiency and robustness of the novel indicator function are verified by several numerical examples.

\vspace{.2in}
{\bf Keywords:} acoustic sources; indicator functions; sparse data; sampling method.

\end{abstract}

\section{Introduction}

In recent years, numerous research methods for reconstructing acoustic sources have been proposed.
These methods can be broadly classified into two categories: qualitative methods and quantitative methods. The extended source itself cannot be uniquely determined from sparse measurements \cite{R3source-Meng-Liu}, and only partial information about the source function is available, including the size and location of its support. 
Accordingly, all existing qualitative methods are established on the basis of the uniqueness results derived under sparse observation data. The studies in \cite{direct-sampling-Liu-Hu, R3source-Meng-Liu} propose algorithms for reconstructing the size and location of the source support using the direct sampling method, while the work presented in \cite{factorization method} develops the factorization method. A recent study in \cite{extended source-sparse-Liu-Shi} achieves the accurate reconstruction of the shape and location of annular sources and polygonal sources from sparse measurement data. 
In contrast, when full-aperture measurement data are accessible, it becomes possible in principle to retrieve complete information about the source function. Existing quantitative methods for this purpose can be further divided into iterative and non‑iterative methods.
The research presented in \cite{continuation method-Bao, recursive-algorithm-Bao} introduces two iterative approaches, namely the continuation method and the recursive method.
Representative non-iterative methods include the Fourier method presented in \cite{Fourier-Wang-Guo-far,Fourier-zhang-Guo-near} and the direct sampling method proposed in \cite{extended source-sparse-Liu-Shi}.

However, a simple, direct equality linking the multi-frequency near-field scattering data to the source function itself remains elusive, particularly one that simultaneously provides both geometrical support and quantitative amplitude recovery without iterative forward solvers. Bridging this gap is the primary objective of this paper.
Compared to the iterative methods \cite{recursive-algorithm-Bao,continuation method-Bao}, the proposed approach bypasses the need for solving the direct problem, thereby offering a more direct and computationally efficient implementation.
In contrast to the Fourier method designed for the near-field regime presented in \cite{Fourier-zhang-Guo-near}, our method avoids computing normal derivative data and does not impose special constraints on the choice of frequencies.
Furthermore, unlike the far-field case studied in \cite{extended source-sparse-Liu-Shi}, the relationship between near-field data and the source function does not reduce to a simple inverse Fourier transform, thereby rendering theoretical analysis in the near-field regime considerably more complex.

The problem model investigated in this paper is as follows. In an isotropic homogeneous medium, the time-harmonic wave 
$u^s$ radiated by $S$ satisfies the Helmholtz equation
\begin{equation*}
\Delta u^s +k^2u^s =-S \quad \text{in}\,\, \R^2, 
\end{equation*}
where $k>0$ is the wave number, $S \in L^2(\R^2)$ is a real-valued function with compact support $\Omega\in B_R(0)$.
Throughout this paper, we denote by $B_R(x)$ the disk centered at $x$ with radius $R$, and by $\pa B_R(x)$ its boundary.
Moreover, the scattered field $u^s$ satisfies the Sommerfeld radiation condition
\begin{equation*}
\lim\limits_{r=|x|\rightarrow \infty}r^{\frac{1}{2}}\left(\frac{\pa u^s}{\pa r}-iku^s\right)=0, \quad x\in \R^2,
\end{equation*}
uniformly for all $x/|x|$.
In this paper, we focus on the following inverse source problem, denoted as \textbf{(IP-near-field)}
\begin{itemize}
    \item Establish a direct and simple equality relation between $u^s$ and the source function $S$.
\end{itemize}

First, we establish an explicit relationship between the scattered field and the Radon transform of the source function. Building on the results in \cite{radon transform-07}, we then derive a direct equality between the source function and the scattered field. This implies that, given sufficient measurement data, we can not only reconstruct the location and shape of the support domain $\Omega$, but also recover the corresponding function values within it.
 
The rest of the paper is organized as follows. In Section \ref{Numerical methods}, we construct the indicator function for \textbf{(IP-near-field)} and develop a corresponding algorithm based on the direct sampling method. In Section \ref{Numerical examples}, the efficiency and robustness of the proposed algorithm are demonstrated through three numerical examples of distinct types.


\section{A novel indicator}
\label{Numerical methods}
\setcounter{equation}{0}
Let $J_n, Y_n$ and $H_n^{(1)}$ denote, respectively, Bessel, Neumann and Hankel functions of the first kind of order $n$.
The scattered field is then given by
\begin{align}\label{Scattered field}
u^s(x,k)=\int_{\R^2}\Phi_k(x,y)S(y)dy,\quad x\in\mathbb{R}^2,
\end{align}
where 
\begin{equation*}
\Phi_k(x,y):=\frac{i}{4}H_0^{(1)}(k|x-y|), x,y \in \R^2, x\neq y.
\end{equation*}
We then define the following indicator function 
\begin{equation}
\begin{aligned}
I_S(z):=&\frac{R}{2\pi}\int_0^{2\pi} \Bigg[  \frac{z-x}{|z-x|}\cdot (\cos\theta, \sin\theta)\Bigg]\int _{0}^{+\infty}k ^2\Big[\Im (u^s(x,k))N_1(k |z-x|)\\
&+\Re (u^s(x,k))J_1(k |z-x|)\Big]dk d\theta, \quad z\in B_R(0),
\label{acoustic-source-integral}
\end{aligned}
\end{equation}
where $x:=R(\cos (\theta), \sin (\theta))$. 
The following theorem shows that the indicator function $I_{S}(z)$ defined in  \eqref{acoustic-source-integral} is exactly the source function $S(z)$, which then gives a formula to compute the source function.

\begin{theorem}
\label{source-uniqueness}
Let $S(z)\in L^2(\R^2)$ be a real-valued source function with compact support $\Omega\subset B_R(0)$, then
\begin{equation*}
I_S(z)=S(z),\quad z\in B_R(0).
\end{equation*}
\end{theorem}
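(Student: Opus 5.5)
The plan is to rewrite the inner $k$-integral in \eqref{acoustic-source-integral} in terms of circular means of $S$ centred on $\partial B_R(0)$, and then to recognise the result as the Finch--Haltmeier--Rakesh inversion formula for the circular mean transform in two dimensions \cite{radon transform-07}.

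\textbf{Step 1: real and imaginary parts of $u^s$.} Since $S$ is real-valued and $\Phi_k=\frac{i}{4}(J_0+iY_0)$, formula \eqref{Scattered field} gives
\begin{equation*}
\Im u^s(x,k)=\frac14\int_{\R^2}J_0(k|x-y|)S(y)\,dy,\qquad \Re u^s(x,k)=-\frac14\int_{\R^2}Y_0(k|x-y|)S(y)\,dy .
\end{equation*}
We introduce the circular means $(MS)(x,r)=\frac{1}{2\pi}\int_0^{2\pi}S(x+r\omega)\,d\omega$ with $x\in\partial B_R(0)$. In polar coordinates about $x$, both integrals become $\frac{\pi}{2}\int_0^{2R}r\,(MS)(x,r)\,\{J_0,-Y_0\}(kr)\,dr$. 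Writing $b=|z-x|$, the $k$-integrand of $I_S$ therefore becomes
\begin{equation*}
\frac{\pi}{2}\int_0^{2R} r\,(MS)(x,r)\,k^2\big[J_0(kr)Y_1(kb)-Y_0(kr)J_1(kb)\big]\,dr .
\end{equation*}
Here $N_1$ is the paper's notation for $Y_1$.

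\textbf{Step 2: evaluating the $k$-integral.} Using $\partial_b J_0(kb)=-kJ_1(kb)$ and $\partial_b Y_0(kb)=-kY_1(kb)$, the bracket equals
\begin{equation*}
-\partial_b\Big(k\big[J_0(kr)Y_0(kb)-Y_0(kr)J_0(kb)\big]\Big).
\end{equation*}
We then invoke the Weber--Schafheitlin type distributional identity
\begin{equation*}
\int_0^\infty k\big[J_0(kr)Y_0(kb)-Y_0(kr)J_0(kb)\big]\,dk=\frac{2}{\pi}\,\mathrm{p.v.}\frac{1}{b^2-r^2}
\end{equation*}
(up to the correct sign), understood in the sense of distributions in $r$. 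This turns $\int_0^\infty k^2[\cdots]\,dk$ into $\partial_b$ of a principal-value kernel. Moving derivatives by integration by parts in $r$, and noting that $\mathrm{p.v.}\,\frac{2r}{r^2-b^2}=\partial_r\log|r^2-b^2|$, we aim to rewrite the inner expression as
\begin{equation*}
c\int_0^{2R}\partial_r\big(r\,\partial_r (MS)(x,r)\big)\,\log\big|r^2-|z-x|^2\big|\,dr .
\end{equation*}
A second ingredient is the geometric factor $\frac{z-x}{|z-x|}\cdot(\cos\theta,\sin\theta)=\frac{z-x}{b}\cdot\frac{x}{R}$. It should absorb the difference between $\partial_b$ and the normal derivative. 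The point is that the FHR formula, written in its alternative ``normal-derivative'' form, involves $\partial_{\nu_x}$ of the log-kernel. Since $\partial_{\nu_x}|z-x|=\frac{x-z}{|x-z|}\cdot\frac{x}{R}$, this is exactly $-\frac{z-x}{b}\cdot\frac{x}{R}$ multiplying $\partial_b$. So we will identify the expression with $\frac{1}{2\pi R}\int_{\partial B_R}\cdots\,ds(x)$, with $ds=R\,d\theta$. This accounts for the prefactor $R/(2\pi)$ in \eqref{acoustic-source-integral}.

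\textbf{Step 3: apply the inversion formula.} With the expression from Step 2 in hand, we apply the two-dimensional inversion formula of \cite{radon transform-07}, in the variant using $\partial_\nu$ of the kernel:
\begin{equation*}
S(z)=\frac{1}{2\pi R}\int_{\partial B_R}\int_0^{2R}\big(\partial_r r\partial_r MS\big)(x,r)\log\big|r^2-|z-x|^2\big|\,dr\,ds(x),\qquad z\in B_R(0),
\end{equation*}
valid for $S$ supported in $B_R(0)$. Comparing constants then gives $I_S(z)=S(z)$. This argument uses smooth $S$ first, and $L^2$ follows by density, with the identity interpreted almost everywhere.

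\textbf{Main obstacle.} The hardest step is Step 2. We must justify interchanging the divergent $k$-integral with the $y$-integral, and we must get the singular kernel exactly right, including whether an extra delta term appears at $r=b$ and the precise sign and constant. I would first regularise with a factor $e^{-\varepsilon k}$, compute $\int_0^\infty k e^{-\varepsilon k}J_0(kr)Y_0(kb)\,dk$ via known Laplace-transform formulas, and pass to the limit $\varepsilon\to0$ in the distributional sense. After that, matching the normal-derivative form of the FHR formula to the factor $\frac{z-x}{|z-x|}\cdot\hat x$ is bookkeeping.
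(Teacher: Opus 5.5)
Your route differs from the paper's, and it has a genuine gap at the point you call bookkeeping. Carry out Step 2 first. The constant is $\frac{4}{\pi}$, not $\frac{2}{\pi}$: both $\int_0^\infty kJ_0(kr)Y_0(kb)\,dk$ and $-\int_0^\infty kY_0(kr)J_0(kb)\,dk$ equal $\frac{2}{\pi}\,\mathrm{p.v.}\frac{1}{b^2-r^2}$, and no delta term appears. The inner $k$-integral then equals $-\partial_b\int_0^{2R}\partial_r(MS)(x,r)\log|r^2-b^2|\,dr$.

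This is not of the form $c\int_0^{2R}\partial_r(r\partial_r MS)\log|r^2-b^2|\,dr$ you aim for. Only one $r$-derivative falls on $MS$, and a $\partial_b$ remains on the kernel. The geometric factor converts $-\partial_b$ into $\partial_{\nu_x}$ acting on the kernel, so what Steps 1--2 really produce is
\[
I_S(z)=\frac{1}{2\pi}\int_{\partial B_R(0)}\int_0^{2R}\partial_r(MS)(x,r)\,\partial_{\nu_x}\log\big|r^2-|z-x|^2\big|\,dr\,ds(x).
\]
This is not the formula you quote in Step 3. The ``normal-derivative form'' of the FHR inversion you appeal to is exactly the statement you would still have to prove.

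To see that the two formulas genuinely differ, use $\partial_{\nu_x}\log|r^2-|z-x|^2|=-\frac{R^2-|z|^2+|z-x|^2}{R\,(r^2-|z-x|^2)}$ and integrate by parts. The right-hand side above minus the right-hand side of your Step 3 formula is
\[
-\frac{1}{\pi R}\int_{\partial B_R(0)} z\cdot(x-z)\;\mathrm{p.v.}\!\int_0^{2R}\frac{\partial_r(MS)(x,r)}{r^2-|x-z|^2}\,dr\,ds(x).
\]
This does not vanish when $MS$ is replaced by a general function on $\partial B_R(0)\times[0,2R]$. It can only vanish because of range conditions satisfied by circular means, and establishing that is a separate result, not a matter of comparing constants.

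The missing idea, which is how the paper argues, is to work at a fixed frequency and never evaluate a $k$-integral. For $y,z\in B_R(0)$ one has \eqref{07-(5)}, i.e.
\[
J_0(k|z-y|)=\frac14\int_{\partial B_R(0)}\big[J_0(k|x-y|)\,\partial_{\nu_x}Y_0(k|z-x|)-Y_0(k|x-y|)\,\partial_{\nu_x}J_0(k|z-x|)\big]\,ds(x).
\]
This follows at once from Graf's addition theorem and the Wronskian $J_nY_n'-J_n'Y_n=\frac{2}{\pi t}$ (here $Y_n$ is the paper's $N_n$). Multiply by $S(y)$ and integrate. Then use your Step 1 expressions for $\Im u^s$ and $\Re u^s$, together with $\partial_{\nu_x}Y_0(k|z-x|)=kY_1(k|z-x|)\frac{z-x}{|z-x|}\cdot\frac{x}{R}$ and the analogous formula for $J_0$. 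This gives
\[
\int_{\R^2}S(y)J_0(k|z-y|)\,dy=kR\int_0^{2\pi}\Big[\frac{z-x}{|z-x|}\cdot(\cos\theta,\sin\theta)\Big]\big(\Im u^s(x,k)\,Y_1(k|z-x|)+\Re u^s(x,k)\,J_1(k|z-x|)\big)\,d\theta .
\]
Inserting this into the polar Fourier inversion \eqref{07-(1)} gives $I_S(z)=S(z)$ after exchanging the $k$- and $\theta$-integrals.

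Neither a distributional Weber--Schafheitlin identity nor any circular-means inversion formula is needed. Your $e^{-\varepsilon k}$ regularization is the natural tool for justifying that last exchange, as an Abel-summed version of \eqref{07-(1)}; the paper handles this point only formally.
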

\begin{proof}
Recall the following two identities \cite{radon transform-07}
\begin{equation}
S(z)=\frac{1}{2\pi}\int_0^{+\infty}\int _{\R^2}S(y)J_0(k|z-y|)dy k dk,\quad z\in  B_R(0),
\label{07-(1)}
\end{equation}
and
\begin{equation}
J_0(k|z-y|)=\frac{1}{4}\int_{\pa B_R(0)}\Bigg[J_0(k |x-y|)\frac{\pa N_0(k |z-x|)}{\nu(x)}-N_0(k |x-y|)\frac{\pa J_0(k |z-x|)}{\nu(x)}\Bigg]ds(x).
\label{07-(5)}
\end{equation}
Substituting equation \eqref{07-(5)} into \eqref{07-(1)} yields  the following expression for $S(z)$
\begin{equation}
\begin{aligned}
S(z)=&\frac{1}{8\pi}\int_{0}^{+\infty}k \int_{\pa B_R(0)}\Bigg[\frac{\pa N_0(k |z-x|)}{\nu(x)}\left(\int_{B_R(0)}J_0(k |x-y|)S(y)dy\right)\\
&-\frac{\pa J_0(k |z-x|)}{\nu(x)}\left(\int_{B_R(0)}N_0(k |x-y|)S(y)dy\right)\Bigg] ds(x) dk ,\quad z\in B_R(0),
\label{07-S(z)}
\end{aligned}
\end{equation}
Inserting \eqref{Scattered field} into \eqref{acoustic-source-integral}, then using equation \eqref{07-S(z)} and the following differentiation equations
\begin{equation*}
\nabla J_0(|z|)=-\frac{z}{|z|}J_1(|z|),\quad \nabla N_0(|z|)=-\frac{z}{|z|}N_1(|z|),\quad z\in \R^2,
\end{equation*}
we derive
\begin{align}
I_S(z)=&\frac{R}{8\pi}\int_0^{2\pi} \Bigg[  \frac{z-x}{|z-x|}\cdot (\cos\theta, \sin\theta)\Bigg]\int _{0}^{+\infty}k ^2\Bigg[\int_{\R^2}J_0(k|x-y|)
S(y)dyN_1(k |z-x|)\cr
&+\int_{\R^2}N_0(k|x-y|)
S(y)dyJ_1(k |z-x|)\Bigg]dk d\theta \cr
=& \frac{1}{8\pi}\int_{\pa B_R(0)}\int _0^{+\infty}k \nu(x)\cdot\Bigg[\int_{\R^2}J_0(k|x-y|)
S(y)dy \nabla_x N_0(k |z-x|)\cr
&+\int_{\R^2}N_0(k|x-y|)
S(y)dy\nabla_xJ_0(k |z-x|)\Bigg]dk ds(x)\cr
=&\frac{1}{8\pi}\int_{0}^{+\infty}k \int_{\pa B_R(0)}\Bigg[\frac{\pa N_0(k |z-x|)}{\nu(x)}\left(\int_{B_R(0)}J_0(k |x-y|)S(y)dy\right)\cr
&-\frac{\pa J_0(k |z-x|)}{\nu(x)}\left(\int_{B_R(0)}N_0(k |x-y|)S(y)dy\right)\Bigg] ds(x) dk \cr
=& S(z),\quad z\in B_R(0)
.
\end{align}
The proof is complete.
\end{proof}


\section{Numerical examples and discussions}\label{Numerical examples}
For numerical implementation, we adopt the following discretization scheme. The sensor positions are uniformly distributed on the circle  $\pa B_5(0)$, forming a sparse array defined as:
\begin{equation}
\Gamma_L:=\left\{5\left(\cos\left(\frac{2\pi l}{L}\right), \sin\left(\frac{2\pi l}{L}\right)\right)\Bigg | l=0,1,\ldots,L-1\right\}
\label{numerical-GammaL}
\end{equation}
where $L$ denotes the number of sensors. The set of wavenumbers used for data acquisition is given by:
\begin{equation}
K:=\left\{k_m=k_{-}+(m-1)dk\big | \, m=1,2,\ldots,N_{\text{max}},\, k_{+}=k_{N_{\text{max}}} \right\}.
\label{numerical-wave-number}
\end{equation}
In all simulations, we set $L=30$ (or $60$), $k_-=0.1$, $k_+=30$ (or $50$) and $dk=0.1$.

Based on the result in Theorem 2.1, we formulate the following  \textbf{Algorithm}  to reconstruct the source function $S(z)$, $z\in \R^2$.

\begin{algorithm}[h!]
\label{main-algorithm}
\caption{Quantitative sampling method for the source function $S(z)$}

1. Collect scattered field patterns $u^s(x,k)$ for all $x\in \Gamma_L$, $k\in [k_-,k_+]$;\\
2. Compute the indicator function $I_{S}(z)$ for all sampling points $z\in D$;\\
3. Plot the indicator function $I_{S}(z)$.
\end{algorithm}

\begin{figure}[h!]
\centering
\begin{tabular}{ccc}
\subfigure[\textbf{Example 1}]{
\includegraphics[width=.25\textwidth]{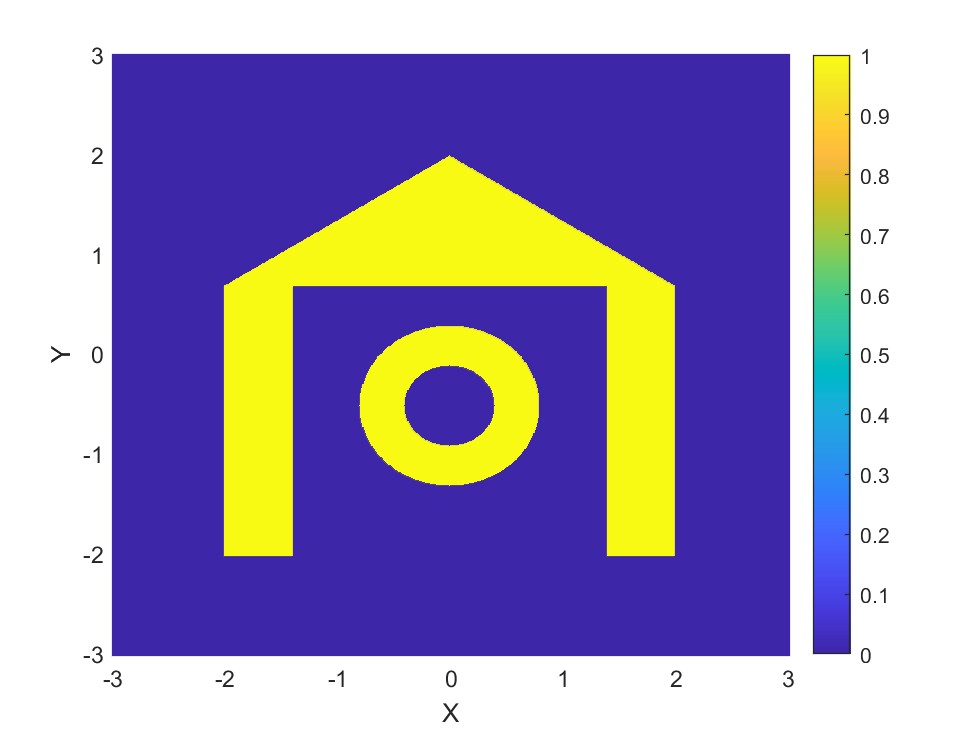}
\label{real-mix}
}\hspace{0em} &  
\subfigure[\textbf{Example 2}]{
\includegraphics[width=.25\textwidth]{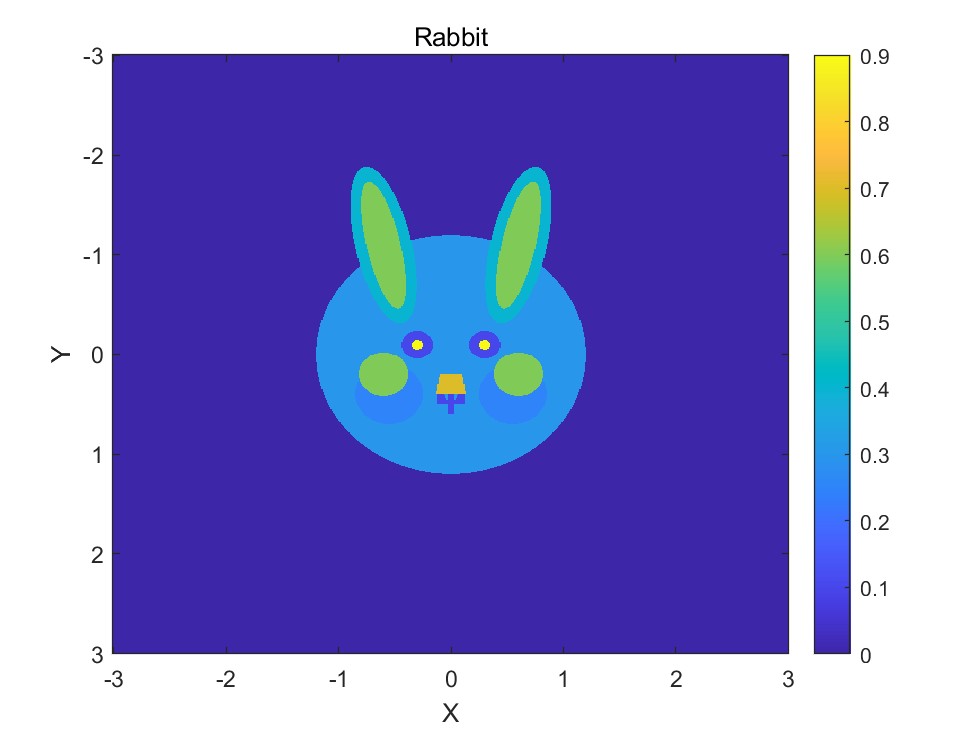}
\label{real-rabbit}
}\hspace{0em} &  
\subfigure[\textbf{Example 3}]{
\includegraphics[width=.25\textwidth]{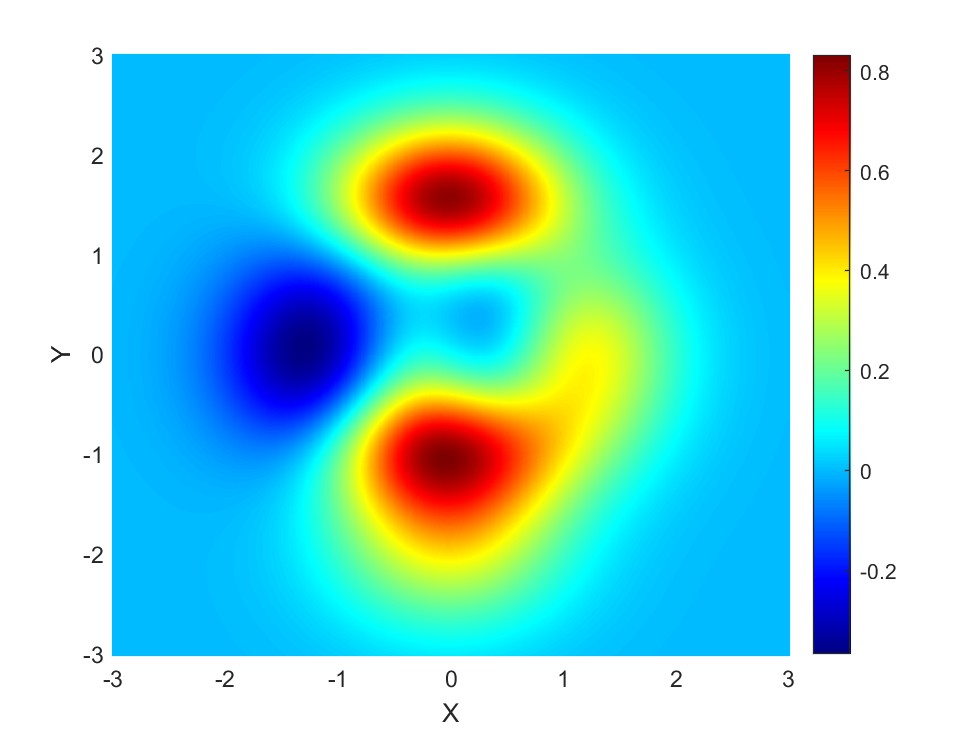}
\label{real-general}
}
\end{tabular}
\caption{The true sources.}
\label{4-real sources}
\end{figure}
The scattered field is calculated using equation \eqref{Scattered field} and then perturbed with noise according to
\begin{equation}
\begin{aligned}
u^{s,\delta}(x_l,k_m)&=u^s(x_l,k_m)(1+\delta \xi_{lm}),\quad x_l\in \Gamma_L,\, k_m\in K,
\label{numerical-u^{s,noise}}
\end{aligned}
\end{equation}
where the relative noise amplitude is taken as $\delta=0.2$ and each $\xi_{lm}$ denotes an independent random variable drawn from a uniform distribution in the interval $[-1,1]$.
Additionally, it is assumed a priori that the source lies inside the search region $D=[-3,3]\times [-3,3]$, which is discretized uniformly with a $601\times 601$ mesh of sampling points $z$.

To verify the effectiveness and robustness of our indicator function $I_S$, we designed the following three examples
\begin{itemize}
    \item \textbf{Example 1}: $S=\chi_{\Omega}$ with $\Om$ being a hybrid source of polygon and annulus in Figure \ref{real-mix};
    \item \textbf{Example 2}: A piecewise constant source function, supported in a rabbit-shaped domain depicted in Figure \ref{real-rabbit};
    \item \textbf{Example 3}: A smooth source function $S(y)$ defined by 
    \begin{equation*}
    \begin{aligned}
     S(y)=&0.3(1 - y_2)^2 e^{-[(y_1)^2 + (y_2 + 1)^2]}- 0.03 e^{-[(y_1 + 1)^2 + (y_2)^2]}\\
     &- [0.3y_1 - (y_1)^3 - (y_2)^5]e^{-[(y_1)^2+ (y_2)^2] },\quad y=(y_1,y_2)\in \R^2,
\end{aligned}
\end{equation*}
as shown in Figure \ref{real-general}.
\end{itemize}


\begin{figure}[h!]
\centering
\begin{tabular}{ccc}
\subfigure[$L=30,k_+=30$]{
\includegraphics[width=.25\textwidth]{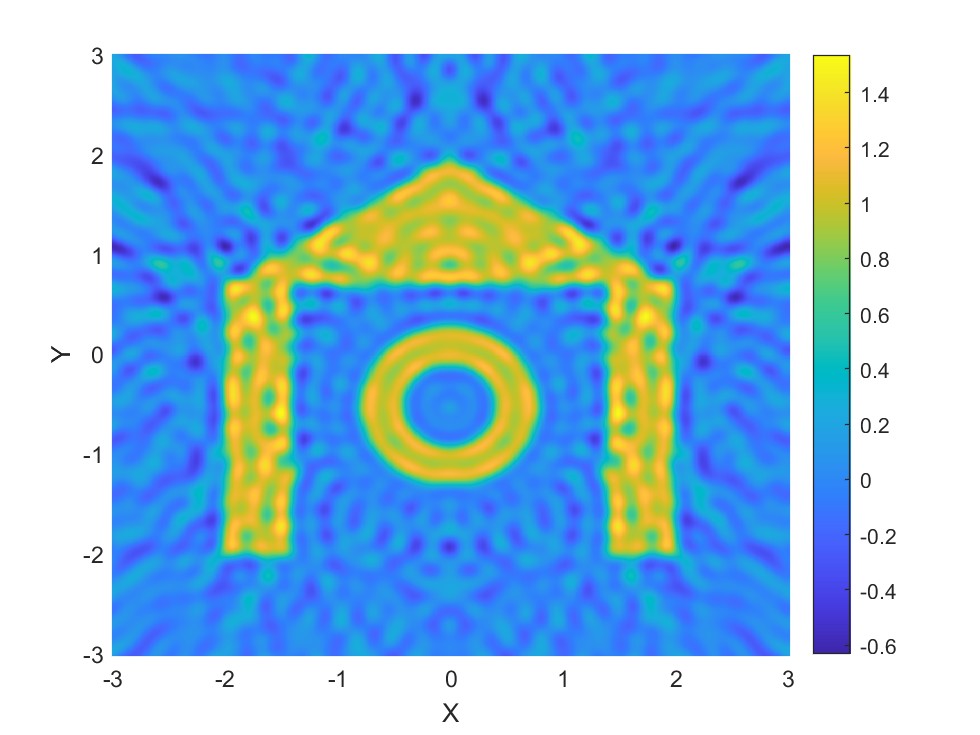}
}\hspace{0em} &
\subfigure[$L=60,k_+=50$]{
\includegraphics[width=.25\textwidth]{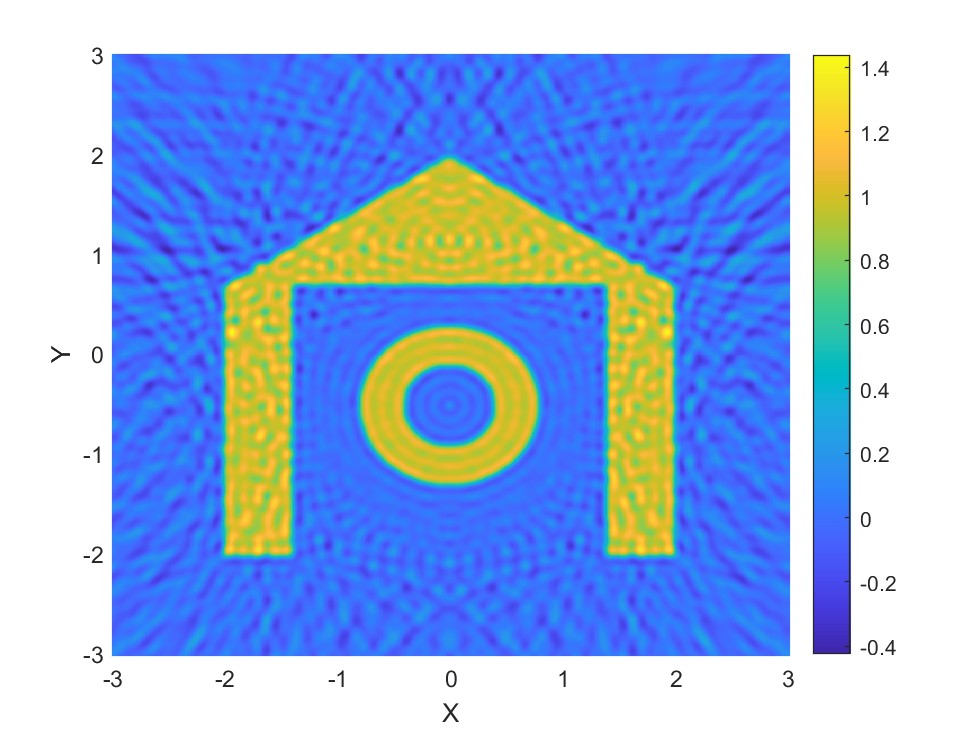}}
\hspace{0em} &
\subfigure[The error $\left| I_S-S\right|$ ($L=60,k_+=50$)]{
\includegraphics[width=.25\textwidth]{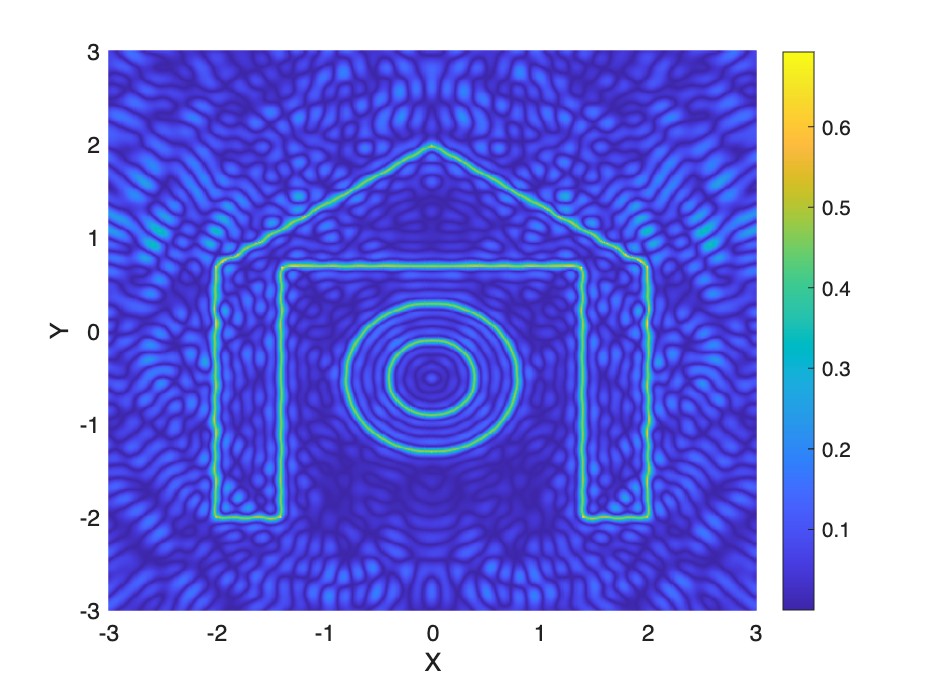}}
\end{tabular}
\caption{Reconstructions of \textbf{Example 1}.}
\label{Mix-pa-source}
\end{figure}
Figure \ref{Mix-pa-source} presents the results of reconstructing the hybrid non-smooth source from Example 1. The plot of $I_S$ successfully captures the geometrical support, which consists of a polygon and an annulus. While the error concentrates near the boundaries, highlighting sensitivity to structural complexity, the interior values of the source function are well retrieved even with $20\%$ noise.

\begin{figure}[h!]
\centering
\begin{tabular}{ccc}
\subfigure[$L=30,k_+=30$]{
\includegraphics[width=.25\textwidth]{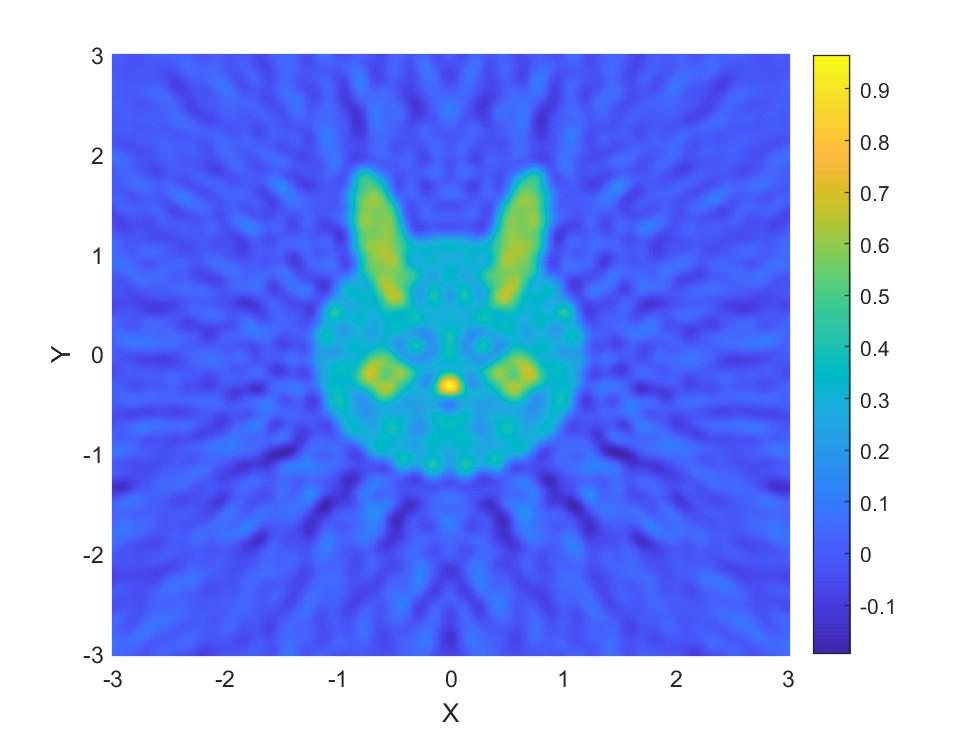}
}\hspace{0em} &
\subfigure[$L=60,k_+=50$]{
\includegraphics[width=.25\textwidth]{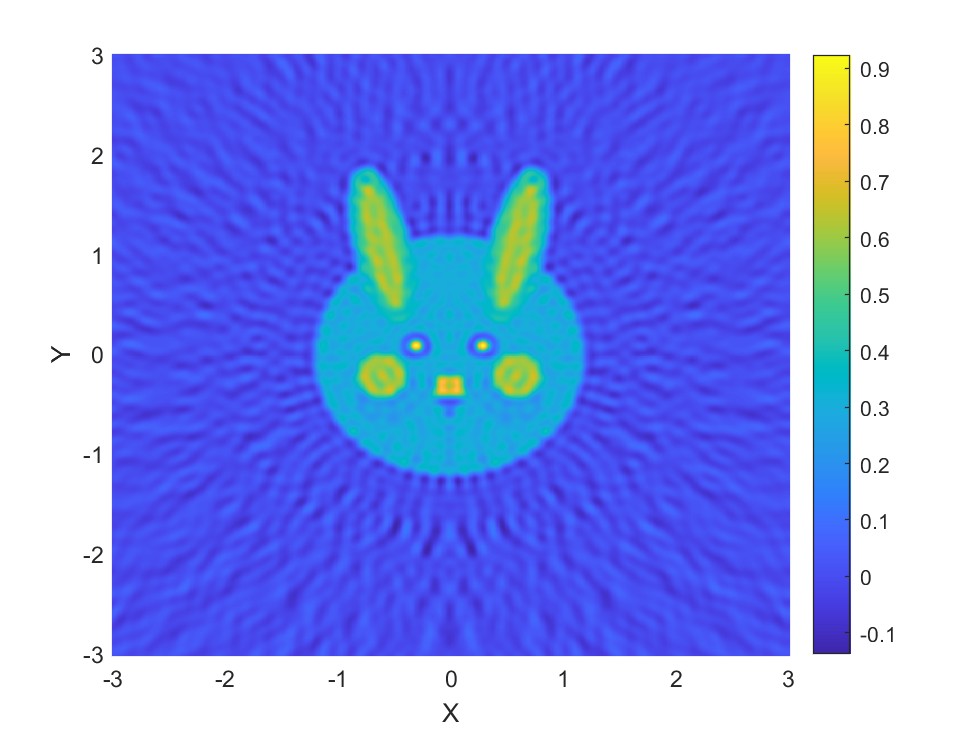}}
\hspace{0em} &
\subfigure[The error $\left| I_S-S\right|$ ($L=60,k_+=50$)]{
\includegraphics[width=.25\textwidth]{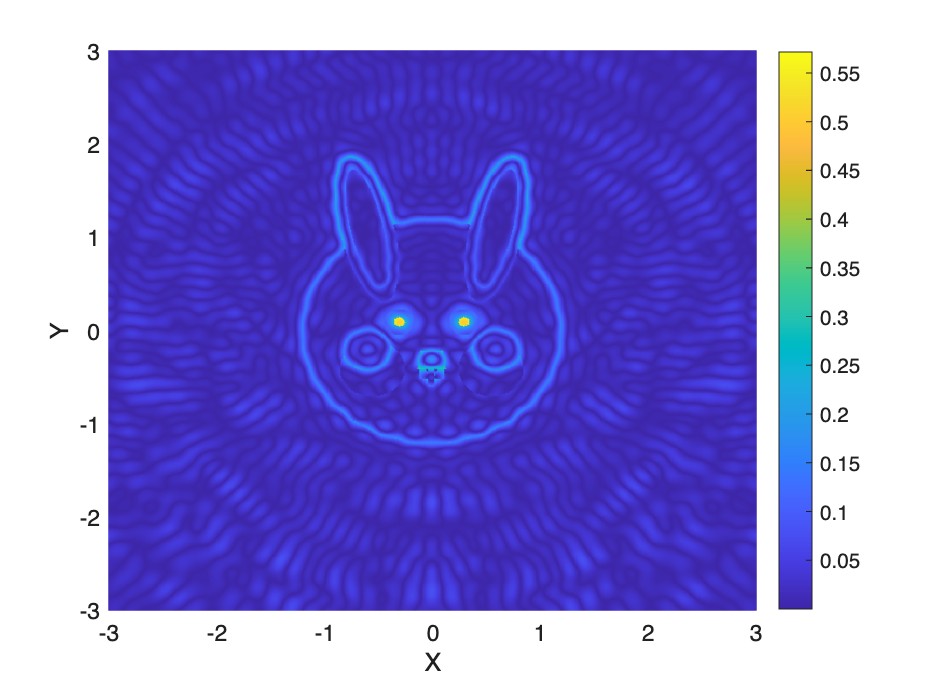}}
\end{tabular}
\caption{Reconstructions of \textbf{Example 2}.}
\label{rabbit-pa-source}
\end{figure}

Figure \ref{rabbit-pa-source} illustrates an example of a piecewise-constant source with sharp discontinuities. Here, we observe that $I_S$ again successfully reconstructs the support domain, including intricate boundaries between constant regions. The method demonstrates robustness against edge artifacts, maintaining transition sharpness even for small or irregularly shaped segments, highlighting its adaptability to complex source profiles.

\begin{figure}[h!]
\centering
\begin{tabular}{ccc}
\subfigure[$L=30,k_+=30$]{
\includegraphics[width=.25\textwidth]{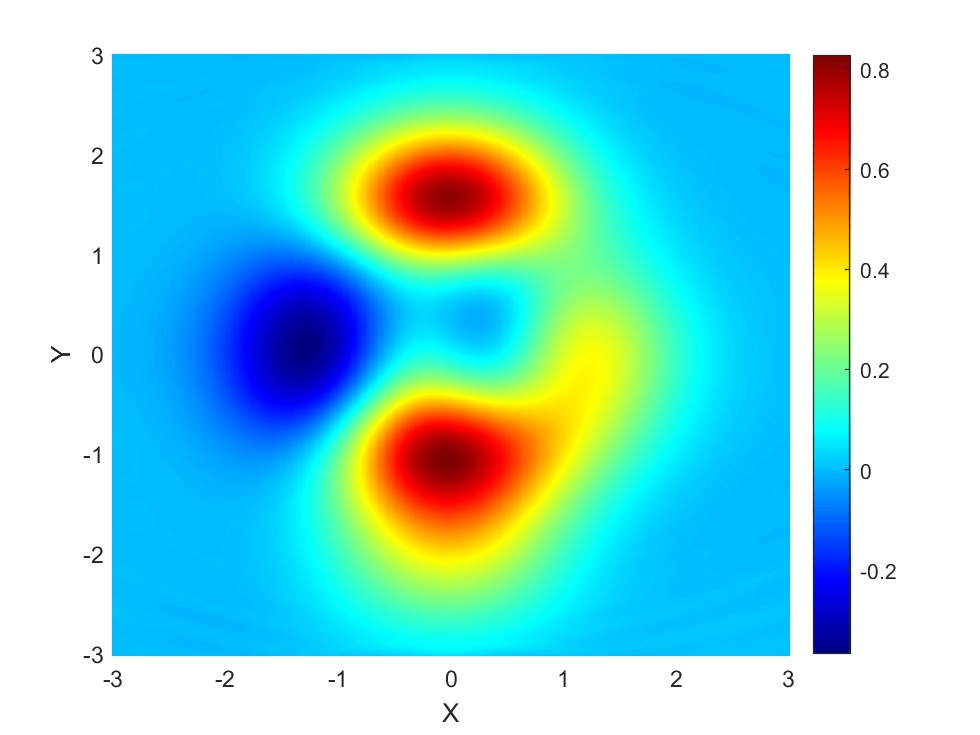}
\label{L30-k-30}
}\hspace{0em} &
\subfigure[$L=60,k_+=30$]{
\includegraphics[width=.25\textwidth]{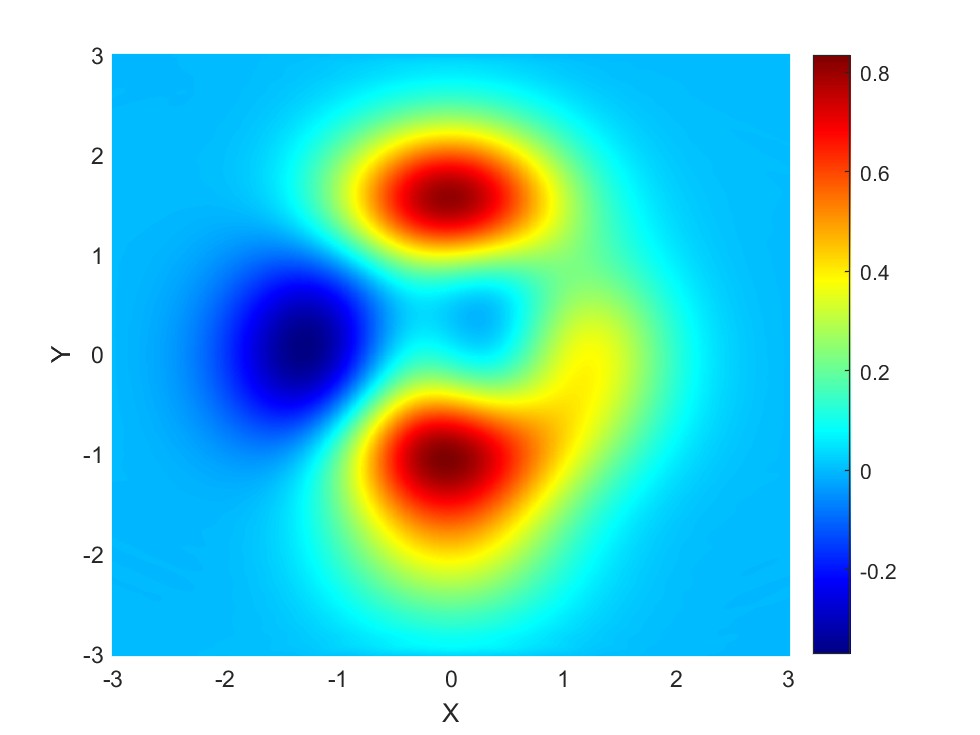}
\label{L60-k-30}
}\hspace{0em} &
\subfigure[The error $\left|I_S-S\right|$ ($L=60,k_+=30$)]{
\includegraphics[width=.25\textwidth]{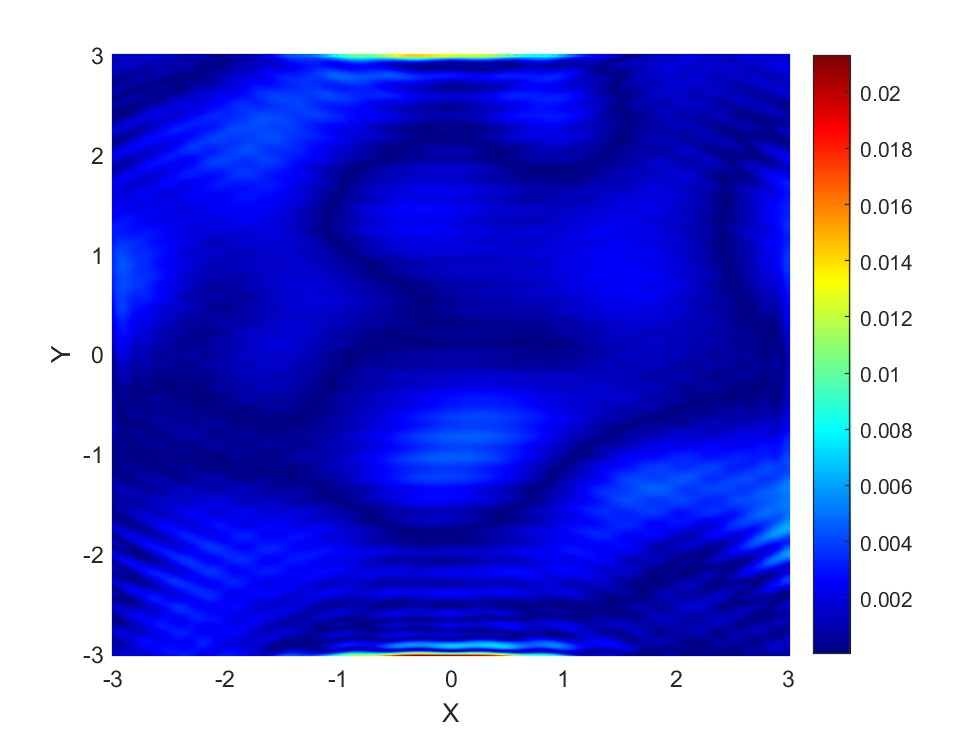}
\label{error-abs}
}
\end{tabular}
\caption{Reconstructions of  \textbf{Example 3}.}
\label{general-source-error}
\end{figure}

Figure \ref{general-source-error} examines the quantitative accuracy of $I_S$ for a smooth, non‑constant source shown in Figure 1(c). Figures \ref{L30-k-30} and \ref{L60-k-30} show the reconstructed source function, which closely matches the amplitude variations, peaks, and gradients of the true source. Figure \ref{error-abs} displays the absolute error $|I_S-S|$. The errors are very small (generally below 0.01), which confirms that $I_S$ recovers function values‌ with high quantitative precision.  This example validates the theoretical equality 
$I_S(z)=S(z)$ established in Theorem 2.1, demonstrating that the Radon-transform‑based formula not only locates the source support but also provides precise amplitude reconstruction.


\section*{Acknowledgment}
The research of X. Liu is supported by the National Key R\&D Program of China under grant 2024YFA1012303  and the NNSF of China under grant 12371430.

\bibliographystyle{SIAM}

\end{document}